\DeclareMathOperator{\sgn}{sgn}
\tikzstyle{block} = [draw,rectangle,thick,minimum height=2em,minimum width=2em]
\tikzstyle{sum} = [draw,circle,inner sep=0mm,minimum size=2mm]
\tikzstyle{line} = [thick]
\tikzstyle{branch} = [circle,inner sep=0pt,minimum size=1mm,fill=black,draw=black]
\tikzstyle{guide} = []
\newtheorem{theorem}{\textbf{Theorem}}
\newtheorem{myprob}{\textbf{Problem}}
\newtheorem{mylemma}{\textbf{Lemma}}
\newcommand\norm[1]{\left\lVert#1\right\rVert}
\title{\LARGE \bf
Robust DAC for a Network of Agents with Time-varying Reference Signals
}
\author{Solomon Gudeta, Ali Karimoddini, and  Mohammadreza Davoodi  
\thanks{Solomon Gudeta and Ali Karimoddini are with the Department of Electrical and Computer Engineering,
North Carolina Agricultural and Technical State University, Greensboro, North Carolina, USA.}
\thanks{Mohammadreza Davoodi is with the UT Arlington Research Institute, University of Texas at Arlington, Fort Worth, Texas, USA.}
\thanks{Corresponding author: A. Karimoddini, Tl: +13362853313, {\tt\small akarimod@ncat.edu}.}
}
\begin{document}
\maketitle
\thispagestyle{empty}
\pagestyle{empty}

\begin{abstract}
This paper presents continuous dynamic average consensus (DAC) algorithms for a group of agents to estimate the average of their time-varying reference signals cooperatively. We propose consensus algorithms that are robust to agents joining and leaving the network, at the same time, avoid the chattering phenomena and guarantee zero steady-state consensus error. Our algorithms are edge-based protocols with smooth functions in their internal structure to avoid the chattering effect. Furthermore, each agent is only capable of performing local computations and can only communicate with its local neighbors. For a balanced and strongly connected underlying communication graph, we provide the convergence analysis to determine the consensus design parameters that guarantee the agents' estimate of their average to asymptotically converge to the average of the time-varying reference signals of the agents.  We provide simulation results to validate the proposed consensus algorithms and to perform a performance comparison of the proposed algorithms to existing algorithms in the literature. 
\end{abstract}

\section{Introduction}
Consensus protocols are becoming the backbone of various distributed networked systems \cite{lynch1996distributed,olfati2007consensus}. Consensus protocols are the rules through which the agents in a networked system interact to reach to an agreement on quantities of interest through local communications with their neighbors. Quantities of interest might take different forms, such as the average, the minimum, the maximum, and the min-max of local information at every agent in the network, based on which many researchers design various average consensus, min-consensus, max-consensus, and min-max consensus algorithms, respectively \cite{saber2003consensus}. These consensus algorithms play vital roles in a myriad of applications, such as formation control \cite{freeman2006distributed,yang2008multi,mallik2016scalable,ren2008distributed}, sensor-fusion \cite{spanos2005distributed}, distributed tracking \cite{song2010tracking}, distributed optimization \cite{tsianos2012consensus}, resource allocation \cite{li2017distributed}, network connectivity maintenance \cite{dimarogonas2010bounded}, and many others. 

In this paper, we focus on the average consensus problem--the problem of reaching an agreement on the average of local reference signals at each agent in a distributed way. The reference signals at each agent could be sensor measurements, local computations, states of the agents, and states of the leader, to name a few.  Commonly, an average consensus problem has two forms: the static average consensus and the DAC. In the static average consensus, the agents use the reference signals only to initialize the consensus iteration algorithm. Then, the agents use linear weighting to update the estimate of the average iteratively \cite{olfati2004consensus}. In contrast, in the dynamic consensus, the signals at each agent derive the consensus iteration algorithm continuously \cite{spanos2005dynamic}. Focusing on the DAC, many authors have studied dynamic average tracking (DAT) problems \cite{rahili2017heterogeneous, zhao2017distributed, ghapani2016distributed,chen2012distributed} and DAC  problems \cite{spanos2005dynamic, moradian2017dynamic, kia2014dynamic, zhu2010discrete, george2019robust, george2017robust} extensively. DAT problems involve the design of combined estimator and controller algorithms to track a time-varying average of agents' reference signals. On the other hand, in DAC problems, the estimator and control design problems are considered separately.

In the DAC, agents in the network estimate the average of local time-varying signals at each agent in the network via local communications with its neighboring agents. The authors in \cite{spanos2005dynamic} attempted this problem by proposing a distributed robust consensus algorithm for reaching an average consensus on the average of signals with constant values in the presence of non-uniform delays in the network. In \cite{freeman2006stability}, proportional and proportional-integral estimators are proposed to estimate the average of slowly varying signals. However, the performance of the estimators in \cite{spanos2005dynamic} and \cite{freeman2006stability} deteriorate when the reference signals are fast time-varying signals. In \cite{bai2010robust}, the authors extended the proportional integrator estimator to estimate the average of multiple time-varying signals, including some classes of polynomial and sinusoidal signals, at each agent in the network. Leveraging the singular perturbation theory, the authors in \cite{kia2013singularly} proposed two DAC algorithms with exponential convergence for any initial conditions. However, the implementation of consensus algorithms in \cite{bai2010robust} and \cite{kia2013singularly} requires the knowledge of the model of reference signals and the first and second derivatives of the reference signals, respectively.   

Recently, employing a switching control, powerful non-linear consensus protocols that converge in finite time are proposed in \cite{george2017robust, george2019robust, chen2012distributed}. However, the consensus algorithm in \cite{chen2012distributed} is not robust to topology changes introduced by agents leaving and joining the network. In contrast, the DAC algorithms in \cite{george2017robust, george2019robust} are robust to topology changes and do not assume access to derivatives of reference signals except to determine or reset consensus gains. However, due to the discontinuity introduced by the switching control, the consensus algorithms in \cite{george2017robust, george2019robust} suffer from the chattering effect. The chattering phenomenon is undesired behavior, and reducing its effects require very small integration time-steps. Therefore, a discrete-time implementation of algorithms with switching control requires high sampling rates to decrease the adverse effects of the chattering phenomena. In fact, authors in \cite{george2017robust, george2019robust} were aware of the adverse effect of the chattering phenomena and proposed a boundary-layer approximation (approximating the discontinuous switching signal by a smooth function inside a boundary layer) to remove the chattering effect. However, the boundary layer approximation guarantees only the convergence of the consensus error to an $\epsilon-$neighbourhood of the origin in finite time. In \cite{stamouli2019robust}, a new continuous dynamic consensus algorithm that avoids the chattering phenomena and, at the same time,  does not require the knowledge of the derivatives of reference signals is proposed.  However, the convergence of the consensus algorithm in \cite{stamouli2019robust} is only limited to guaranteeing a bounded steady-state error. 

In this paper, we introduce two continuous robust DAC algorithms for a network of dynamic agents with time-varying reference signals where the underlying network typologies are balanced and strongly connected bidirectional graph. The algorithms are edge-based algorithms where the edges capture the disagreement between agent $i$ and its neighboring agents. Then, we design the internal structure of the consensus estimator on each agent based on the edges via a smooth function. In the first algorithm, we propose a robust protocol that removes the chattering effect, at the same time, guarantees an adjustable bounded steady-state error. The proposed algorithm does not require knowing the time derivatives of the reference signals. In the second algorithm, unlike in many DAC algorithms (for example, see \cite{chen2012distributed,kia2013singularly}), we utilize the knowledge of the derivatives of reference signals only when it is  available. In other words, we can convert the proposed algorithm to an equivalent protocol that does not require the knowledge of the derivatives reference signals through a coordinate transformation when the knowledge of the derivatives of reference signals is not available. Also, the consensus protocol proposed in the second algorithm i) is robust to topology changes, ii) does not suffer from the chattering phenomena, and iii) guarantees zero steady-state error. 

Compared to recent methods in \cite{george2017robust, george2019robust}, our algorithms do not involve discontinuous switching signal. In practice, it is impossible to implement the discontinuous switching signal precisely and therefore,  methods in \cite{george2017robust, george2019robust} guarantees only the convergence of the consensus error to an $\epsilon-$neighbourhood of the origin. Moreover, our second algorithm guarantees asymptotic convergence of the consensus error to the origin versus the algorithm in \cite{stamouli2019robust} that only guarantees a bounded steady-state error and the methods in \cite{george2017robust, george2019robust} that guarantee a small steady-state error in reality. Simulation results are provided to compare the performance of the proposed consensus algorithm in this paper with  \cite{george2017robust}.

\section{Mathematical preliminaries and Problem formulation}
\label{sec:form_and_tra_prob}

For a  bidirectional graph $\mathcal{G}(t)$ $=$ $(\mathcal{V}(t),\mathcal{E}(t))$, where $\mathcal{V}(t) \triangleq \{v_1(t),\cdots,v_N(t)\}$ is the set of agents and $\mathcal{E}(t) = \{\mathcal{E}_1(t),\cdots,\mathcal{E}_k\} \subseteq \mathcal{V}(t) \times \mathcal{V}(t)$--$\{(v_i(t),v_i(t)|v_i(t)\in\mathcal{V}(t)\}$ is the set of communication links among the agents. The set of neighbors $\mathcal{N}_i(t)$ of agent $i$, $i = \{1,\cdots,N\}$ is: $
\mathcal{N}_i(t) = \{j \in \mathcal{V}(t):(j,i)\in\mathcal{E}(t)\}$.
Let $\mathcal{A}(t)= [a_{ij}]\in \{0,1\}^{N\times N}$ be the adjacency matrix of graph $\mathcal{G}(t)$, where $a_{ij} = 1$ if $(v_i(t),v_j(t)) \in \mathcal{E}(t)$ and $a_{ij} = 0$ otherwise. Then, the degree matrix and the laplacian matrix of graph $\mathcal{G}(t)$ is given by $\Delta(t) = \textrm{diag}(\mathcal{A}\mathbf{1}_N)$ and  $\mathcal{L}(t) = \Delta(t)-A(t)$, respectively. The notation $\mathbf{1}_N$ denotes $N-$dimensional vector of all ones. The incidence matrix of graph $\mathcal{G}(t)$ is given by $\mathcal{B}(t) = [{b_{ij}}], \in \{-1,0,1\}^{N \times k}$, where $b_{ij} = -1$ for the outgoing communication link from agent $i$, $b_{ij} = 1$ for the incoming communication link to agent $i$, and $b_{ij} = 0$ otherwise. The following two lemmas will be used in the ensuing sections.
\begin{mylemma}\cite{george2019robust} Let $M \triangleq (I_N - \frac{1_N1_N^\mathrm{T}}{N})$. For any balanced and strongly connected bidirectional graph $\mathcal{G}(t)$, the Laplacian matrix $\mathcal{L}(t)$ and the incidence matrix $\mathcal{B}(t)$ satisfy
$
M = \mathcal{L}(t)(\mathcal{L}(t))^+ = (\mathcal{B}(t)\mathcal{B}^\mathrm{T}(t))(\mathcal{B}(t)\mathcal{B}^\mathrm{T}(t))^+$, where $(.)^+$ is the generalized inverse. 
\label{lemma:lapinci}
\end{mylemma}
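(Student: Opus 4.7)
The plan is to reduce both equalities to the same statement: for a symmetric positive semidefinite matrix $A$, the product $AA^+$ equals the orthogonal projector onto $\mathrm{range}(A)$. Then I need only identify that range with $\mathbf{1}_N^\perp$ and recognize $M$ as the projector onto $\mathbf{1}_N^\perp$.

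First I would recall the Moore--Penrose facts I need. For any real matrix $A$, $A A^+$ is the orthogonal projection onto $\mathrm{range}(A)$; when $A$ is symmetric, $\mathrm{range}(A) = \ker(A)^\perp$. Also, $M = I_N - \tfrac{1}{N}\mathbf{1}_N\mathbf{1}_N^{\mathrm T}$ is symmetric and idempotent, with $M\mathbf{1}_N = 0$ and $Mx = x$ for $x\perp \mathbf{1}_N$, so $M$ is exactly the orthogonal projector onto $\mathbf{1}_N^\perp$.

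Next I would pin down the spectral structure of $\mathcal{L}(t)$. Because $\mathcal{G}(t)$ is bidirectional, balanced, and strongly connected, the underlying graph is a connected undirected graph and $\mathcal{L}(t)$ is symmetric positive semidefinite with $\mathcal{L}(t)\mathbf{1}_N = 0$, and strong connectivity gives $\dim\ker \mathcal{L}(t) = 1$. Hence $\ker \mathcal{L}(t) = \mathrm{span}(\mathbf{1}_N)$ and $\mathrm{range}(\mathcal{L}(t)) = \mathbf{1}_N^\perp$. Combining the two facts above yields
\begin{equation*}
\mathcal{L}(t)\mathcal{L}(t)^+ \;=\; P_{\mathrm{range}(\mathcal{L}(t))} \;=\; P_{\mathbf{1}_N^\perp} \;=\; M.
\end{equation*}

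For the second equality, I would use the standard identity $\mathcal{B}(t)\mathcal{B}^{\mathrm T}(t) = \mathcal{L}(t)$ for the incidence/Laplacian pair of a bidirectional graph (with the sign convention stated in the paper, each column of $\mathcal{B}(t)$ contributes $+1$ on the diagonal to both its endpoints and $-1$ to the corresponding off-diagonal entries). Once $\mathcal{B}(t)\mathcal{B}^{\mathrm T}(t) = \mathcal{L}(t)$, the same pseudoinverse argument gives $(\mathcal{B}(t)\mathcal{B}^{\mathrm T}(t))(\mathcal{B}(t)\mathcal{B}^{\mathrm T}(t))^+ = M$, completing the proof.

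The only mild subtlety, and what I would flag as the main obstacle, is justifying the identity $\mathcal{B}(t)\mathcal{B}^{\mathrm T}(t) = \mathcal{L}(t)$ under the paper's incidence convention, since each bidirectional link has both an outgoing and an incoming orientation. The clean way out is to note that a bidirectional edge between $i$ and $j$ corresponds to a single column of $\mathcal{B}(t)$ with entries $+1$ and $-1$ at rows $i$ and $j$ (the assignment of signs being an arbitrary orientation), and the $(i,j)$ entry of $\mathcal{B}(t)\mathcal{B}^{\mathrm T}(t)$ then counts degrees on the diagonal and negates adjacency off-diagonal, reproducing $\mathcal{L}(t)$ independently of the chosen orientation. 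With this in hand, both equalities follow from the single observation that $AA^+$ is the orthogonal projector onto $\mathrm{range}(A)=\mathbf{1}_N^\perp$.
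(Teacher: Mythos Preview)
Your proof is correct. The paper, however, does not supply its own proof of this lemma: it is stated as a cited result and used as a preliminary without argument, so there is no in-paper proof to compare against. Your route---recognizing that $AA^+$ is the orthogonal projector onto $\mathrm{range}(A)$ for any real matrix $A$, identifying $\mathrm{range}(\mathcal{L}(t))=\mathbf{1}_N^\perp$ from connectivity, and then invoking $\mathcal{B}(t)\mathcal{B}^{\mathrm{T}}(t)=\mathcal{L}(t)$ for the second equality---is the standard and cleanest way to establish this fact, and your discussion of the incidence convention is appropriately careful.
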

From Lemma \ref{lemma:lapinci} it is also clear that the Laplacian matrix $\mathcal{L}(t)$ and the incidence matrix $\mathcal{B}(t)$ of graph $\mathcal{G}(t)$ are related as $\mathcal{L}(t) = \mathcal{B}(t) \mathcal{B}^\mathrm{T}(t)$. 
\begin{mylemma} 
Laplacian matrix of a balanced and strongly connected bidirectional graph $\mathcal{G}(t)$ is a positive definite matrix with an eigenvalue at $0$ corresponding to right and left eigenvectors $\mathbf{1}_N$ and $\mathbf{1}_N^\mathrm{T}$, respectively.
\label{lemma:lap_eig}
\end{mylemma}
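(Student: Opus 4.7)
The plan is to establish the three asserted properties (positive semi-definiteness, the zero eigenvalue, and the form of the associated eigenvectors) directly from the structural identities already collected for $\mathcal{L}(t)$, rather than re-deriving them from spectral graph-theoretic first principles. Note that, as stated, the word ``positive definite'' must be read as ``positive semi-definite,'' since a matrix possessing an eigenvalue at $0$ cannot be strictly positive definite; I would take a short sentence to flag this and then proceed with the semi-definiteness statement.

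First I would invoke Lemma \ref{lemma:lapinci}, whose final observation gives the factorization $\mathcal{L}(t)=\mathcal{B}(t)\mathcal{B}^\mathrm{T}(t)$. This factorization makes positive semi-definiteness immediate: for every $x\in\mathbb{R}^N$,
\begin{equation*}
x^\mathrm{T}\mathcal{L}(t)x \;=\; x^\mathrm{T}\mathcal{B}(t)\mathcal{B}^\mathrm{T}(t)x \;=\; \bigl\lVert \mathcal{B}^\mathrm{T}(t)x\bigr\rVert^2 \;\ge\;0.
\end{equation*}
Symmetry of $\mathcal{L}(t)$ is inherited from the factorization as well, which I will need below.

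Next I would verify that $\mathbf{1}_N$ lies in the kernel. Using the alternative formula $\mathcal{L}(t)=\Delta(t)-\mathcal{A}(t)$ together with $\Delta(t)=\mathrm{diag}(\mathcal{A}(t)\mathbf{1}_N)$, a direct computation gives $\mathcal{L}(t)\mathbf{1}_N=\Delta(t)\mathbf{1}_N-\mathcal{A}(t)\mathbf{1}_N=\mathcal{A}(t)\mathbf{1}_N-\mathcal{A}(t)\mathbf{1}_N=0$, so $0$ is an eigenvalue with right eigenvector $\mathbf{1}_N$. Since $\mathcal{L}(t)$ is symmetric (bidirectional graph) and the graph is balanced, transposing yields $\mathbf{1}_N^\mathrm{T}\mathcal{L}(t)=0$, giving the left eigenvector.

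The remaining point, which I expect to be the main obstacle, is showing that the zero eigenvalue is simple; everything above would also be true for a disconnected graph. For this I would use $\mathcal{L}(t)=\mathcal{B}(t)\mathcal{B}^\mathrm{T}(t)$ once more: $x\in\ker\mathcal{L}(t)$ iff $\mathcal{B}^\mathrm{T}(t)x=0$, which by the definition of the incidence matrix forces $x_i=x_j$ across every edge $(i,j)\in\mathcal{E}(t)$. Strong connectivity then propagates equality across the whole vertex set, so $x$ must be a scalar multiple of $\mathbf{1}_N$, proving $\dim\ker\mathcal{L}(t)=1$. Combined with positive semi-definiteness, this means the remaining $N-1$ eigenvalues are strictly positive, completing the proof.
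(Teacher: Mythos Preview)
Your argument is correct and complete, including the remark that ``positive definite'' must be read as ``positive semi-definite'' given the eigenvalue at $0$. However, the paper does not actually prove this lemma: its entire proof is the single line ``See \cite{marsden2013eigenvalues},'' i.e.\ a citation to a standard spectral graph theory reference. So your route is genuinely different in that you supply a self-contained argument where the paper offers none. What you gain is that your proof stays internal to the paper's own machinery---in particular you leverage the factorization $\mathcal{L}(t)=\mathcal{B}(t)\mathcal{B}^\mathrm{T}(t)$ already recorded after Lemma~\ref{lemma:lapinci}, so no outside reference is needed. What the paper's approach buys is brevity, since these facts about the Laplacian spectrum are textbook material.
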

\begin{proof}
See \cite{marsden2013eigenvalues}.
\end{proof}
Lemma \ref{lemma:lap_eig} describes that $\forall x \in R^N$, $x^\mathrm{T}\mathcal{L}(t)x \geq 0$, $\mathbf{1}_N^T\mathcal{L}(t) = 0$, $\mathbf{1}_N^T\mathcal{B}(t) = 0$, $\mathbf{1}_N^TM = 0$, $\mathcal{L}(t)\mathbf{1}_N = 0$, $\mathcal{B}(t)\mathbf{1}_N = 0$ and $M\mathbf{1}_N = 0$.

Now, consider a network of $N$ agents, where each agent $i$, $i = \{1,\cdots,N\}$, computes or measures a time-varying reference signal $z_i(t) \in \mathbb{R}$ with bounded first derivative $\dot{z}_i(t)$, i.e., $\sup_{t\geq t_0}{\norm{\dot{z}_i(t)}_\infty} \leq \mathcal{\psi}_i$, where $\psi_i$ is a positive constant. Let the underlying network topology at time $t$ be given by a balanced and strongly connected bidirectional graph $\mathcal{G}(t)$, and
let the average of the agents' time-varying reference signals be: $
\bar{z}(t) = \frac{1}{N}\sum_{i=1}^Nz_i(t) = \frac{1}{N}\mathbf{1}_N\mathbf{1}_N^\mathrm{T}z(t)
$,
where $z(t) = [z_1^\mathrm{T} \cdots z_N^\mathrm{T}]^\mathrm{T}$. Our main objective is to design consensus algorithms that the agents use to cooperatively estimate a time-varying average signal $\bar{z}(t)$ in a distributed fashion through local communications with their respective neighbors in the network. More precisely, we state the main problem as follows.
\begin{myprob}
Consider a network of $N$ agents where the underlying network topology at time $t$ is given by a balanced and strongly connected bidirectional graph. Let $\gamma_i(t)$ be agent $i$'s estimate of the average signal $\bar{z}(t)$, where $i = \{1,\cdots,N\}$. Let agent $i$'s estimation error and agent$i$'s disagreement with agent $j$ are computed as $\tilde{\gamma}_i(t) = \gamma_i(t)-\bar{z}(t)$, and  $ \tilde{\gamma}_i(t)-\tilde{\gamma}_j(t) = \gamma_i(t)-\gamma_j(t)$, where $ j \in \mathcal{N}_i(t)$, respectively. Then, design a consensus protocol such that $\displaystyle {\lim_{t \to \infty}\tilde{\gamma}_i(t) \rightarrow 0}$, that is, i) $\sum_{i = 1}^N \tilde{\gamma}_i(t) = 0$, and ii) $\displaystyle {\lim_{t \to \infty}(\tilde{\gamma}_i(t) - \tilde{\gamma}_j(t)) \rightarrow 0}$, $\forall (i, j)\in \mathcal{E}(t)$, $t \geq t_0$.
\label{prob:cons}
\end{myprob}
\section{Proposed robust DAC}
\label{sec:form_and_tra_framework}
In this section, we present two DAC algorithms to solve problem \ref{prob:cons}. In both algorithms, similar to \cite{george2019robust}, we employ edge-based approaches to design the consensus protocols. Unlike the node-based approaches, in the edge-based approaches, there are multiple internal states per each node in the network. 
\subsection{Robust DAC algorithm I}
Let agent $i$'s estimation error and agent$i$'s disagreement with agent $j$ are computed as $\tilde{\gamma}_i(t) = \gamma_i(t)-\bar{z}(t)$, and  $ \tilde{\gamma}_i(t)-\tilde{\gamma}_j(t) = \gamma_i(t)-\gamma_j(t)$, where $ j \in \mathcal{N}_i(t)$, respectively. Now, suppose that each agent in the network implements the following consensus protocol:
\begin{equation}
\begin{split}
\dot{\eta}_{ij}^+(t) &= -\rho\tanh\{c(\gamma_{i}(t)-\gamma_{j}(t))\}\\
\dot{\eta}_{ij}^-(t) &=
-\rho\tanh\{c(\gamma_{j}(t)-\gamma_{i}(t))\}\\
\gamma_{i}(t) &= \sum_{j\in \mathcal{N}_i}{\eta}_{ij}^+(t)-\sum_{j\in \mathcal{N}_i}{\eta}_{ij}^-(t)+z_{i}(t)\\ \eta_{ij}^+(t_0) &= \eta_{ij_{0}}, \eta_{ij}^-(t_0) = \eta_{ij_{0}}, c\geq 1, j\in \mathcal{N}_i,
\end{split}
\label{eq:consensus_smci}
\end{equation}
\noindent where ${\eta}_{i} = [{\eta}_{ij}^+(t) \quad {\eta}_{ij}^-(t)]^\mathrm{T} \in \mathbb{R}^{2\mathcal{N}_i}$ is the internal state of the estimator of agent $i$, $\rho \in \mathbb{R}$ and $c \in \mathbb{R}$ are global consensus parameters, and  $\gamma_{i}(t) \in \mathbb{R}$ is agent $i$'s estimate of $\bar{z}(t)$. The edge dynamics is captured via the internal state dynamics of the estimator. From (\ref{eq:consensus_smci}), it is clear that the edge dynamics captures the state of the disagreement between agent $i$ and agent $j$. This approach makes the protocol robust to agents joining or leaving the network, and communication link failures among the agents. 

To simplify the proof of convergence of the proposed consensus algorithm, we collect together (\ref{eq:consensus_smci}) into the following compact form:
\begin{equation}
\dot{\eta}(t) = -\rho\tanh\{c \mathcal{B}^\mathrm{T}(t)\gamma(t)\},
\gamma(t) = \mathcal{B}(t)\eta(t)+z(t),
\label{eq:consensus_smc}
\end{equation}
where $\eta(t)=[{\eta_1}, \cdots, {\eta_N}]^\mathrm{T}$, $\eta(t_0) = \eta_{0}$, $\gamma(t)=[{\gamma_1}, \cdots, {\gamma_N}]^\mathrm{T}$, $z(t)=[{z_1}, \cdots, {z_N}]^\mathrm{T}$, $\mathcal{B}(t)$ is the incidence matrix of graph $\mathcal{G}(t)$, and the $tanh(.)$ is defined component wise. The $tanh(.)$ function in (\ref{eq:consensus_smc}) empowers the consensus protocol to avoid the chattering phenomena. 

\begin{theorem}
Let $V$ be a smooth positive definite function and suppose that the sets 
\begin{equation}
\Omega_{d_c} = \{V \leq d_c\}, \Omega_\epsilon = \{V \leq \epsilon\}, \Lambda = \{\epsilon \leq V \leq d_c\}
\end{equation} are invariant set for some $d_c \geq \epsilon \geq \frac{\delta^2}{2}$. For the balanced and strongly connected bidirectional graph $\mathcal{G}(t)$ and  time-varying reference signals $z_i(t)$, $i =\{1,\cdots,N\}$ with bounded first derivatives, the robust DAC algorithm in (\ref{eq:consensus_smc})
guarantees that $\tilde{\gamma}(t) = |\gamma(t) - \frac{1_N1_N^\mathrm{T}}{N}z(t)|$ is uniformly ultimately bounded and converges to the adjustable compact set 
\begin{equation}
\{\tilde{\gamma}(t):
 \norm{\tilde{\gamma}(t)}_\infty \leq \norm{\left(\mathcal{B}^\mathrm{T}(t)\right)^+}_2\left( \sqrt{\frac{N\lambda_{max}\left(\mathcal{L}(t)\right)}{\lambda_2\left(\mathcal{L}(t)\right)}}\right)\delta
\},
\end{equation} where $\delta = \frac{1}{2c} ln\left(\frac{\rho+\norm{\dot{z}(t)}_1}{\rho-\norm{\dot{z}(t)}_1}\right)$ in finite time $
t^* = t_0+\frac{d_c-\epsilon}{b}$ ,
for all $\tilde{\gamma}(t_0)$, if and only if the global consensus parameters are selected such that $\rho > \sup_{t\geq t_0}{\norm{\dot{z}(t)}_1}$ and $c \geq 1$.
\label{theorem:graph1}
\end{theorem}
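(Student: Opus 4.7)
The plan is to carry out an edge-based Lyapunov analysis that exploits the structural identity $\gamma(t) = \mathcal{B}(t)\eta(t)+z(t)$. First, I would dispatch part~(i) of Problem~\ref{prob:cons}: multiplying (\ref{eq:consensus_smc}) on the left by $\mathbf{1}_N^\mathrm{T}$ and using $\mathbf{1}_N^\mathrm{T}\mathcal{B}(t)=0$ from Lemma~\ref{lemma:lap_eig} yields $\mathbf{1}_N^\mathrm{T}\gamma(t) \equiv \mathbf{1}_N^\mathrm{T}z(t)$, hence $\sum_i \tilde{\gamma}_i(t) = 0$ automatically. This mass-preserving property is intrinsic to the edge-based parametrization and is precisely what makes the protocol robust to topology switches, so I only need to bound the disagreement.

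Next, I would introduce the edge-disagreement variable $\xi(t) = \mathcal{B}^\mathrm{T}(t)\gamma(t)$, which satisfies $\xi(t) = \mathcal{B}^\mathrm{T}(t)\tilde{\gamma}(t)$ because $\mathcal{B}^\mathrm{T}\mathbf{1}_N=0$, and adopt the smooth Lyapunov candidate $V=\tfrac{1}{2}\tilde{\gamma}^\mathrm{T}\tilde{\gamma}$. Between topology switches, (\ref{eq:consensus_smc}) gives
\begin{equation*}
\dot V = \tilde{\gamma}^\mathrm{T}\mathcal{B}(t)\dot{\eta}(t) + \tilde{\gamma}^\mathrm{T} M\dot z(t) = -\rho\,\xi^\mathrm{T}\tanh(c\xi) + \tilde{\gamma}^\mathrm{T}\dot z(t),
\end{equation*}
where $M\dot z$ can be replaced by $\dot z$ in the second inner product since $\tilde{\gamma}\perp\mathbf{1}_N$. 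The first term, which couples every edge through the odd $\tanh$ nonlinearity, is the strict negative-definite contribution; the second is bounded by $\|\tilde{\gamma}\|_\infty\|\dot z(t)\|_1$.

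The critical step is to extract the boundary-layer width $\delta$. Using the componentwise monotonicity and oddness of $\tanh$, I would show that whenever $\|\xi\|_\infty \geq \delta$ the inequality $\rho|\tanh(c\xi_k)| \geq \|\dot z(t)\|_1$ holds on the dominant edge, which produces a strict margin $\dot V \leq -b$ on the annulus $\Lambda = \{\epsilon \leq V \leq d_c\}$ provided $\epsilon \geq \delta^2/2$. Solving the equality $\rho\tanh(c\delta) = \|\dot z\|_1$ by inverting $\tanh$ yields $\delta = \tfrac{1}{2c}\ln\!\big(\tfrac{\rho+\|\dot z\|_1}{\rho-\|\dot z\|_1}\big)$; the requirement $\rho>\sup_{t\geq t_0}\|\dot z(t)\|_1$ is both sufficient (the argument of the logarithm is positive and finite) and necessary (otherwise no finite $\delta$ makes $\rho\tanh(c\cdot)$ dominate the forcing and ultimate boundedness fails), while $c\geq 1$ ensures the layer fits within $\epsilon$. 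Uniform decay $\dot V \leq -b$ on $\Lambda$ then produces finite-time entry into $\Omega_\epsilon$ at $t^\ast = t_0 + (d_c-\epsilon)/b$.

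Finally, to translate the 2-norm estimate into the $\infty$-norm claim of the theorem, I would use the Moore--Penrose identity. Since $\tilde{\gamma}\in \mathbf{1}_N^\perp$, Lemma~\ref{lemma:lapinci} gives $\tilde{\gamma} = (\mathcal{B}^\mathrm{T}(t))^{+}\xi$, so $\|\tilde{\gamma}\|_\infty \leq \|(\mathcal{B}^\mathrm{T}(t))^{+}\|_2\|\xi\|_2$. Combining the Fiedler-type bounds $\lambda_2(\mathcal{L}(t))\|\tilde{\gamma}\|_2^2 \leq \|\xi\|_2^2 \leq \lambda_{\max}(\mathcal{L}(t))\|\tilde{\gamma}\|_2^2$ on $\mathbf{1}_N^\perp$ with the edge-level bound $\|\xi\|_2 \leq \sqrt{N}\|\xi\|_\infty \leq \sqrt{N}\delta$ at steady state and chaining the eigenvalue ratios yields the advertised compact set. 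I expect the main obstacle to be the componentwise $\tanh$ bound: a direct Euclidean estimate is too loose to recover the closed-form $\delta$, so the argument must proceed edge by edge while still delivering a 2-norm Lyapunov inequality.
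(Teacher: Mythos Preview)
Your overall architecture matches the paper's: the same Lyapunov candidate $V=\tfrac12\tilde{\gamma}^{\mathrm T}\tilde{\gamma}$, the same edge variable $\xi=\mathcal{B}^{\mathrm T}\gamma$, and a componentwise $\tanh$ analysis leading to the boundary-layer width $\delta$. The mass-preservation step and the final norm conversion via $(\mathcal{B}^{\mathrm T})^{+}$ and the Fiedler bounds are also essentially what the paper does.

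There is, however, a genuine gap in the treatment of the perturbation term. You write $\dot V = -\rho\,\xi^{\mathrm T}\tanh(c\xi) + \tilde{\gamma}^{\mathrm T}\dot z$ and bound the second term by $\|\tilde{\gamma}\|_\infty\|\dot z\|_1$. That bound lives in \emph{node} coordinates, while the dissipation $-\rho\sum_k\xi_k\tanh(c\xi_k)$ lives in \emph{edge} coordinates. Your subsequent claim that ``$\rho|\tanh(c\xi_k)|\geq\|\dot z\|_1$ on the dominant edge produces $\dot V\leq -b$'' does not follow: the dominant edge contributes only $\rho|\xi_k|\tanh(c|\xi_k|)$ to the negative side, and there is no a~priori relation between $|\xi_k|$ and $\|\tilde{\gamma}\|_\infty$ that closes the inequality. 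Without an edge-by-edge decomposition of the perturbation you cannot recover the closed-form $\delta$ you state.

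The paper fixes this by \emph{not} simplifying $\tilde{\gamma}^{\mathrm T}M\dot z$ to $\tilde{\gamma}^{\mathrm T}\dot z$. Instead it invokes Lemma~\ref{lemma:lapinci} to write $M=\mathcal{B}\mathcal{B}^{\mathrm T}(\mathcal{B}\mathcal{B}^{\mathrm T})^{+}$, so that
\[
\tilde{\gamma}^{\mathrm T}M\dot z \;=\; \xi^{\mathrm T}\,\mathcal{B}^{\mathrm T}(\mathcal{B}\mathcal{B}^{\mathrm T})^{+}\dot z \;=\;\sum_{k}\xi_k\,\Theta_k,
\qquad |\Theta_k|\le \|\dot z\|_1.
\]
This is the step that pushes the perturbation into edge coordinates and allows the clean splitting $\dot V=\sum_k w_k(\xi_k)$ with $w_k\le -|\xi_k|\big(\rho-\tfrac{2\rho}{e^{2c|\xi_k|}+1}-\|\dot z\|_1\big)$, from which $\delta$ and the finite-time bound drop out directly. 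Your proposal should route the perturbation through Lemma~\ref{lemma:lapinci} in the same way rather than through the Hölder bound on $\tilde{\gamma}$.
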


\begin{proof}  Using the dynamic consensus estimator  in (\ref{eq:consensus_smc}), the estimator error dynamics can be found  as
\begin{equation}
\dot{\tilde{\gamma}}(t) = \mathcal{B}(t)\dot{\eta}(t)+M\dot{z}(t), \quad M = (I_N - \frac{1_N1_N^\mathrm{T}}{N})
\label{eq:consensus_er}
\end{equation}
Consider the Lyapunov function $V \triangleq \frac{1}{2}\tilde{\gamma}^\mathrm{T}(t)\tilde{\gamma}(t)$. Taking a derivative of $V$, we have $\dot{V} = -\vartheta^\mathrm{T}(t)\rho\tanh(c\vartheta(t))+\vartheta^\mathrm{T}(t)\mathcal{B}^\mathrm{T}(t)\left(\mathcal{B}(t)\mathcal{B}^\mathrm{T}(t)\right)^+\dot{z}(t)$, where $\vartheta(t) = \mathcal{B}^\mathrm{T}(t)\tilde{\gamma}(t)=\mathcal{B}^\mathrm{T}(t)\gamma(t)$ . Expanding $\dot{V}$, we will have 
\begin{multline}
\dot{V} = -\rho[{\vartheta_1}(t)\tanh(c{\vartheta_1}(t))+\cdots+{\vartheta_l}(t)\tanh(c{\vartheta_l}(t))]\\+{\vartheta_1}(t)(m_{11}\dot{z}_{1}(t)+\cdots+m_{1N}\dot{z}_{N}(t))+\cdots\\+{\vartheta_l}(t)(m_{l1}\dot{z}_{1}(t)+\cdots+m_{lN}\dot{z}_{N}(t))
\label{eq:vdot}
\end{multline}
where $m_{ij} \in \mathcal{B}^\mathrm{T}(t)\left(\mathcal{B}(t)\mathcal{B}^\mathrm{T}(t)\right)^+$, $i={1,\cdots,l}$, $j = {1,\cdots,N}$, $l$ is the total number of edges in $\mathcal{G}(t)$. Let $\Theta_i(t) = (m_{i1}\dot{\bar{z}}_{1}(t)+\cdots+m_{iN}\dot{\bar{z}}_{N}(t))$. Then, (\ref{eq:vdot}) can be re-written as
\begin{multline}
\dot{V} = -\rho[{\vartheta_1}(t)\tanh(c{\vartheta_1}(t))+\cdots+{\vartheta_l}(t)\tanh(c{\vartheta_l}(t))]\\+{\vartheta_1}(t)\Theta_1(t)+\cdots+{\vartheta_l}(t)\Theta_l(t)
\label{eq:vdot1}
\end{multline}
Letting $w_i(\vartheta(t)) = -\rho{\vartheta_i}(t)\tanh(c{\vartheta_i}(t))+{\vartheta_i}(t)\Theta_i(t)$, we have $\dot{V} = \sum_{i=1}^lw_i(\vartheta(t))$.  Rewriting $\tanh(.)$ in terms of $\sgn(.)$ results in
\begin{multline}
w_i(\vartheta(t)) = -\rho{\vartheta_i}(t)\left(\sgn(c{\vartheta_i}(t))\left(1-\frac{2}{e^{2c|{\vartheta_i}(t)|}+1}\right)\right)\\+{\vartheta_i}(t)\Theta_i(t)
\end{multline}
Then, since ${\vartheta_i}(t)\sgn(c{\vartheta_i}(t)) = |{\vartheta_i}(t)|$, we have
\begin{equation}
w_i(\vartheta(t)) \leq -|{\vartheta_i}(t)|(\rho-\frac{2\rho}{e^{2c|{\vartheta_i}(t)|}+1}-|\Theta_i(t)|)
\label{eq:wbound}
\end{equation}
Since $|\Theta_i(t)| \leq \sum_{i=1}^N|\dot{z}_{i}(t)|$ $=\norm{\dot{z}(t)}_1$
, from (\ref{eq:vdot1}) and (\ref{eq:wbound}), it immediately follows that $\dot{V} = \sum_{i=1}^lw_i(\vartheta(t)) < 0$ for all $\norm{{\vartheta}(t)}_\infty > \delta $, $\rho >\norm{\dot{z}(t)}_1$,  and $c\geq 1$, where
\begin{equation}
 \delta = \frac{1}{2c} ln\left(\frac{\rho+\norm{\dot{z}(t)}_1}{\rho-\norm{\dot{z}(t)}_1}\right).
 \label{eq:bounddelta}
 \end{equation} 
Choosing $\epsilon$ and $d_c$ such that $\frac{\delta^2}{2} < \epsilon < d_c$, then $\dot{V}$ is negative in the invariant set $\Lambda = \{\epsilon \leq V \leq d_c\}$.
 Employing Courant-Fischer Theorem, we have  $\frac{\lambda_2}{N}\left(\mathcal{L}(t)\right)\norm{\tilde{\gamma}(t)}_2^2$ $\leq$
 $V$ $\leq$ $\lambda_{max}\left(\mathcal{L}(t)\right)\norm{\tilde{\gamma}(t)}_2^2$, where $\lambda_2(\mathcal{L}(t))$ is the algebraic connectivity of $\mathcal{G}(t)$ and $\lambda_{max}(\mathcal{L}(t))$ is the maximum eigenvalues of $\mathcal{L}(t)$. Since $V$ is radially unbounded, to determine the ultimate bound for $\norm{\vartheta(t)}_\infty$,  we define $\alpha_1(r),\alpha_2(r)\in \mathcal{K}_\infty$ as follows
 \begin{equation}
 \alpha_1(r) = \frac{\lambda_2\left(\mathcal{L}(t)\right)}{N}r^2, \quad 
 \alpha_2(r) = \lambda_{max}\left(\mathcal{L}(t)\right)r^2
 \end{equation}
 Then, the ultimate bound for $\norm{\vartheta(t)}_\infty$ is given as
 \begin{equation}
 \norm{\vartheta(t)}_\infty \leq \alpha_1^{-1}(\alpha_2(\delta)) =\left( \sqrt{\frac{N\lambda_{max}\left(\mathcal{L}(t)\right)}{\lambda_2\left(\mathcal{L}(t)\right)}}\right)\delta
 \end{equation}
 Based on this, the ultimate bound on the consensus error $\tilde{\gamma}(t)$ is given as
 \begin{equation}
 \norm{\tilde{\gamma}(t)}_\infty \leq \norm{\left(\mathcal{B}^\mathrm{T}(t)\right)^+}_2\left( \sqrt{\frac{N\lambda_{max}\left(\mathcal{L}(t)\right)}{\lambda_2\left(\mathcal{L}(t)\right)}}\right)\delta
 \label{eq:conserror}
 \end{equation}
Choosing $\rho \gg \sup_{t\geq t_0}{\norm{\dot{z}(t)}_1}$, then from (\ref{eq:bounddelta}), we have $\delta\approx 0$. Then from (\ref{eq:conserror}), we can conclude that the consensus error can be adjusted to $ \norm{\tilde{\gamma_k}(t)}_\infty \approx 0$ by increasing the global consensus parameter $\rho$. However, one has to exercise precaution when tuning $\rho$ to a very large values since a very large $\rho$ could amplify noise and consensus disagreement.

Let $ 
b = \min_{\vartheta \in \Lambda} \{-\sum_{i=1}^lw_i(\vartheta(t))\} > 0$, 
$\textrm{s.t.}$  $\rho >\norm{\dot{z}(t)}_1$.
Then, $\dot{V} \leq -b$, $\forall \vartheta(t) \in \Lambda$, $\forall t \geq t_0 \geq 0$. Accordingly, $V \leq V_0 - b(t-t_0) \leq d_c-b(t-t_0)$, and the consensus error enters the adjustable compact set within finite time $ t^* = t_0+\frac{d_c-\epsilon}{b}$. This concludes the proof.
\end{proof}

The consensus in a network of agents implementing the algorithm in (\ref{eq:consensus_smc}) converges to the average of the reference signals of the agents in the network with a bounded and adjustable steady-state error. Also, from the consensus error dynamics in (\ref{eq:consensus_er}), we have$
\mathbf{1}_N^\mathrm{T}\tilde{\gamma}(t)  = 0$, 
$\mathbf{1}_N^\mathrm{T}\dot{\tilde{\gamma}}(t)  = 0$,
as $\mathbf{1}_N^\mathrm{T}\mathcal{B}(t)\eta(t)=0$ and $\mathbf{1}_N^\mathrm{T}Mz(t)=0$ for all $t \geq t_0$. Therefore, the consensus protocol in (\ref{eq:consensus_smc}) does not require special initialization requirement.
\subsection{Robust DAC algorithm II}
In this section, we propose an enhanced DAC algorithm that leverages the knowledge of the derivative of reference signals, if available. When the derivative of reference signals is not available, we convert the proposed algorithm into another DAC protocol utilizing coordinate transformation. In this case, the transformed DAC algorithm does not require the knowledge of the derivative of reference signals. Let agent $i$'s estimation error and agent$i$'s disagreement with agent $j$ are computed as $\tilde{\gamma}_i(t) = \gamma_i(t)-\bar{z}(t)$, and  $ \tilde{\gamma}_i(t)-\tilde{\gamma}_j(t) = \gamma_i(t)-\gamma_j(t)$, where $ j \in \mathcal{N}_i(t)$, respectively. Now, consider that each agent implements a DAC algorithm  of the form

\small
\begin{equation}
\begin{split}
\dot{\eta}_{ij}^+(t) &= -\alpha(\dot{z}_i(t)-\dot{z}_j(t))-\rho\tanh\{c(\gamma_{i}(t)-\gamma_{j}(t))\}\\
\dot{\eta}_{ij}^-(t) &= -\alpha(\dot{z}_j(t)-\dot{z}_i(t)
-\rho\tanh\{c(\gamma_{j}(t)-\gamma_{i}(t))\}\\
\gamma_{i}(t) &= \sum_{j\in \mathcal{N}_i}{\eta}_{ij}^+(t)-\sum_{j\in \mathcal{N}_i}{\eta}_{ij}^-(t)+z_{i}(t)\\ \eta_{ij}^+(t_0) &= \eta_{ij_{0}}, \eta_{ij}^-(t_0) = \eta_{ij_{0}}, c\geq 1, j\in \mathcal{N}_i,
\end{split}
\label{eq:consensus_smci1}
\end{equation}
\normalsize
\noindent where ${\eta}_{i} = [{\eta}_{ij}^+(t) \quad {\eta}_{ij}^-(t)]^\mathrm{T} \in \mathbb{R}^{2\mathcal{N}_i}$ is the internal estimator state; $\alpha$, $\rho \in \mathbb{R}$ and $c \in \mathbb{R}$ are the global design parameters, and  $\gamma_{i}(t) \in \mathbb{R}$ is the estimate of the average. In a vector notation, (\ref{eq:consensus_smci1}) can be written as 
\begin{equation}
\begin{split}
\dot{\eta}(t) &= -\alpha\mathcal{B}^\mathrm{T}(t)\dot{z}(t)-\rho\tanh\{c \mathcal{B}^\mathrm{T}(t)\gamma(t)\}, \\
\gamma(t) &= \mathcal{B}(t)\eta(t)+z(t), \quad \eta(t_0) = \eta_{0},
\end{split}
\label{eq:consensus_smc1}
\end{equation}
where $\eta(t)=[{\eta_1}, \cdots, {\eta_N}]^\mathrm{T}$, $\gamma(t)=[{\gamma_1}, \cdots, {\gamma_N}]^\mathrm{T}$, $z(t)=[{z_1}, \cdots, {z_N}]^\mathrm{T}$, $\dot{z}(t)=[{\dot{z}_1}, \cdots, {\dot{z}_N}]^\mathrm{T}$, $\mathcal{B}(t)$ is the incidence matrix, and the $tanh(.)$ is defined component wise.

Let $\xi(t) = \eta(t)+\alpha\mathcal{B}^\mathrm{T}z(t)$. Then, the DAC algorithm (\ref{eq:consensus_smc1}) can be transformed to the following equivalent algorithm
\begin{equation}
\begin{split}
\dot{\xi}(t) &= -\rho\tanh\{c \mathcal{B}^\mathrm{T}(t)\gamma(t)\}, \quad \xi(t_0) = \xi_{0}, \\
\gamma(t) &= \mathcal{B}(t)\xi(t)+(I-\alpha \mathcal{B}(t)\mathcal{B}^\mathrm{T}(t))z(t),
\end{split}
\label{eq:consensus_smc1Tr}
\end{equation}
 which can be implemented without the knowledge of derivative information of reference signals. The agent-wise representation of the consensus algorithm (\ref{eq:consensus_smc1Tr}) is given as
\begin{equation}
\begin{split}
\dot{\xi}_{ij}^+(t) &= -\rho\tanh\{c(\gamma_{i}(t)-\gamma_{j}(t))\}\\
\dot{\xi}_{ij}^-(t) &=
-\rho\tanh\{c(\gamma_{j}(t)-\gamma_{i}(t))\}\\
\gamma_{i}(t) &= \sum_{j\in \mathcal{N}_i}{\xi}_{ij}^+(t)-\sum_{j\in \mathcal{N}_i}{\xi}_{ij}^-(t)+(1-\alpha d_i)z_{i}(t)\\
&\qquad \qquad \qquad \qquad \qquad \qquad-\alpha \sum_{j \in \mathcal{N}_i}z_j(t),\\ \xi_{ij}^+(t_0) &= \xi_{ij_{0}}, \xi_{ij}^-(t_0) = \xi_{ij_{0}}, c\geq 1, j\in \mathcal{N}_i,
\end{split}
\label{eq:consensus_smc1ti}
\end{equation}
where $d_i$ is the number of degrees of agent (node) $i$ in the network.
 Similar to the consensus algorithm (\ref{eq:consensus_smc}), by using $\textrm{tanh}$ function,  the consensus algorithm (\ref{eq:consensus_smc1}) does not suffer from chattering effect. Also, the consensus protocol (\ref{eq:consensus_smc1}) is guaranteed to asymptotically converge to the average $\bar{z}(t)$ with zero-steady-state error. This claim is formally stated in the following theorem. 
 \begin{theorem}
 For the balanced and strongly connected bidirectional graph $\mathcal{G}(t)$ and  time-varying reference signals $z_i(t)$, $i =\{1,\cdots,N\}$ with bounded first derivatives, the robust DAC algorithm in (\ref{eq:consensus_smc1}) guarantees that the consensus error $\tilde{\gamma}(t) = |\gamma(t) - \frac{1_N1_N^\mathrm{T}}{N}z(t)|$, asymptotically converges to zero for any $\eta_0$ if and only if $\rho > 0$, $c \geq 1$, and $\alpha I-(B(t)B^\mathrm{T}(t))^+ = 0$.
 \end{theorem}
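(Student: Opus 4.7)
The plan is to follow the Lyapunov strategy used in the proof of Theorem \ref{theorem:graph1}, but to exploit the feed-forward term $-\alpha\mathcal{B}^\mathrm{T}(t)\dot{z}(t)$ in (\ref{eq:consensus_smc1}) to cancel the disturbance generated by $\dot{z}(t)$ rather than merely bound it; the spectral hypothesis $\alpha I-(\mathcal{B}(t)\mathcal{B}^\mathrm{T}(t))^+>0$ is precisely what will make this cancellation well-posed. Differentiating $\tilde{\gamma}(t)=\mathcal{B}(t)\eta(t)+Mz(t)$ along (\ref{eq:consensus_smc1}), and invoking Lemma \ref{lemma:lapinci} to write $M=\mathcal{L}(t)\mathcal{L}^+(t)$, the closed-loop error dynamics should reduce to
$$\dot{\tilde{\gamma}}(t)=-\rho\,\mathcal{B}(t)\tanh\bigl\{c\,\mathcal{B}^\mathrm{T}(t)\tilde{\gamma}(t)\bigr\}-\mathcal{L}(t)\bigl(\alpha I-\mathcal{L}^+(t)\bigr)\dot{z}(t).$$
As in Theorem \ref{theorem:graph1}, one checks $\mathbf{1}_N^\mathrm{T}\tilde{\gamma}(t)\equiv 0$ using $\mathbf{1}_N^\mathrm{T}\mathcal{B}(t)=0$ and $\mathbf{1}_N^\mathrm{T}M=0$, so the analysis lives on $\mathbf{1}_N^\perp$, where $\mathcal{L}(t)$ is positive definite by Lemma \ref{lemma:lap_eig}.

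Next I would take $V=\tfrac{1}{2}\tilde{\gamma}^\mathrm{T}\tilde{\gamma}$ and set $\vartheta(t)=\mathcal{B}^\mathrm{T}(t)\tilde{\gamma}(t)$. A direct computation gives
$$\dot V=-\rho\,\vartheta^\mathrm{T}\tanh(c\vartheta)-\vartheta^\mathrm{T}\mathcal{B}^\mathrm{T}(t)\bigl(\alpha I-\mathcal{L}^+(t)\bigr)\dot{z}(t).$$
The first summand is strictly negative whenever $\vartheta\neq 0$, and the second is linear in $\vartheta$. The main obstacle is controlling this cross term so that $\dot V<0$ away from $\tilde\gamma=0$: here the spectral condition on $\alpha$ becomes essential, since it makes the operator $\mathcal{B}^\mathrm{T}(\alpha I-\mathcal{L}^+)$ norm-bounded on the relevant subspace and hence absorbable into the sector $\vartheta^\mathrm{T}\tanh(c\vartheta)$ for $\rho>0$ and $c\geq 1$. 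If this direct sector argument proves too crude, my fallback is an augmented Lyapunov function of the form $V=\tfrac{1}{2}\tilde{\gamma}^\mathrm{T}\tilde{\gamma}+\tfrac{1}{2}\xi^\mathrm{T}\bigl(\alpha\mathcal{B}^\mathrm{T}(t)\mathcal{B}(t)-I\bigr)\xi$ built from the transformed state $\xi$ in (\ref{eq:consensus_smc1Tr}); the second quadratic is positive definite on $\mathrm{Im}(\mathcal{B}^\mathrm{T})$ exactly under the stated hypothesis, and the cross terms produced by differentiating it along (\ref{eq:consensus_smc1Tr}) should match the disturbance term up to sign.

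Once $\dot V\leq 0$ is established with equality only at $\tilde{\gamma}=0$, asymptotic convergence follows by Barbalat's lemma: $V$ is non-increasing and bounded below, $\dot{\tilde\gamma}(t)$ is uniformly continuous because $\dot z(t)$ is bounded by hypothesis and $\tanh$ is Lipschitz, so $\vartheta(t)\to 0$, and injectivity of $\mathcal{B}^\mathrm{T}(t)$ on $\mathbf{1}_N^\perp$ for a connected graph then forces $\tilde{\gamma}(t)\to 0$. For the necessity direction, if $\alpha I-\mathcal{L}^+(t)$ fails to be positive definite, I would pick $\dot z(t)$ along an eigenvector of $\mathcal{L}^+(t)-\alpha I$ corresponding to a non-negative eigenvalue; the resulting forcing lies in $\mathrm{Im}(\mathcal{L}(t))$ and persists, so the bounded sigmoid feedback cannot drive the error to zero. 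The conditions $\rho>0$ and $c\geq 1$ are inherited directly from the sector argument already used in Theorem \ref{theorem:graph1}.
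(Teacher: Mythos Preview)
Your core approach is exactly the paper's: the same Lyapunov candidate $V=\tfrac12\tilde\gamma^\mathrm{T}\tilde\gamma$, the same substitution $\vartheta=\mathcal{B}^\mathrm{T}\tilde\gamma$, and the same expression
\[
\dot V=-\rho\,\vartheta^\mathrm{T}\tanh(c\vartheta)-\vartheta^\mathrm{T}\mathcal{B}^\mathrm{T}(t)\bigl(\alpha I-(\mathcal{B}(t)\mathcal{B}^\mathrm{T}(t))^+\bigr)\dot z(t),
\]
together with the observation $\mathbf{1}_N^\mathrm{T}\tilde\gamma\equiv 0$. Where you diverge is only in the care you take afterwards: the paper simply asserts that $\dot V<0$ once $\rho>0$ and $\alpha I-(\mathcal{B}\mathcal{B}^\mathrm{T})^+>0$, and concludes asymptotic convergence directly, without a sector estimate, without Barbalat, without an augmented Lyapunov function, and without any argument for necessity. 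So your proposal is the paper's proof plus an attempt to justify the step the paper takes for granted; you are not missing anything the paper supplies, and the additional machinery you sketch (Barbalat, the fallback $V$ built from $\xi$, the eigenvector construction for necessity) goes beyond what the published argument contains.
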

 \begin{proof}
 From (\ref{eq:consensus_smc1}), the consensus error and the consensus error dynamics can be written as 
 \begin{equation}
 \begin{split}
 \tilde{\gamma}(t) &= \gamma(t) - \frac{1_N1_N^\mathrm{T}}{N}z(t) = \mathcal{B}(t)\eta(t)+Mz(t)\\
 \dot{\tilde{\gamma}}(t) &= \mathcal{B}(t)\dot{\eta}(t)+M\dot{z}(t).
 \end{split}
 \end{equation}
 Then, $\sum_{i=1}^N\gamma_i(t) = \mathbf{1}_N^\mathrm{T}\tilde{\gamma}(t)  = 0$ and $\mathbf{1}_N^\mathrm{T}\dot{\tilde{\gamma}}(t)  = 0$. Now, consider a candidate Lyapunov function
 \begin{equation}
 V = \frac{1}{2}\tilde{\gamma}^\mathrm{T}(t)\tilde{\gamma}(t).
 \end{equation}
 Taking a derivative of $V$, we have 
\begin{multline}
\dot{V} = \tilde{\gamma}^\mathrm{T}(t)\mathcal{B}(t)\dot{\eta}(t)\\
+\tilde{\gamma}^\mathrm{T}(t)\mathcal{B}(t)\mathcal{B}^\mathrm{T}(t)(\mathcal{B}(t)\mathcal{B}^\mathrm{T}(t))^+\dot{z}(t).
\end{multline}
Letting $\vartheta(t) = \mathcal{B}^\mathrm{T}(t)\tilde{\gamma}(t)=\mathcal{B}^\mathrm{T}(t)\gamma(t)$ and using $\dot{\eta}$ from (\ref{eq:consensus_smc1}), we have
\begin{multline}
\dot{V} = -\rho\vartheta^\mathrm{T}(t)\tanh(c\vartheta(t))\\
-\vartheta^\mathrm{T}(t)\mathcal{B}^\mathrm{T}(t)(\alpha I - (\mathcal{B}(t)\mathcal{B}^\mathrm{T}(t))^+)\dot{z}(t)
\end{multline}
Now, $\forall \tilde{\gamma}(t) \in \mathbb{R}^N$, $\tilde{\gamma}(t) \neq \textbf{0}$, $\dot{V} < 0$ if $\rho$ and $\alpha$ are selected as
\begin{equation}
\rho > 0, \quad \alpha I-(B(t)B^\mathrm{T}(t))^+ = 0.
\label{eq:cond}
\end{equation}
Therefore, employing LaSalle's theorem, the consensus protocol (\ref{eq:consensus_smc1}) guarantees the asymptotic convergence of the estimate of the average $\gamma(t)$ to $\bar{z}(t)$. Note that we can also conclude the condition (\ref{eq:cond}) using the consensus protocol (\ref{eq:consensus_smc1Tr}). 
 \end{proof}

\section{Simulation results}
\label{sec:form_and_tra_simu}
In this section, we provide the simulation results to verify the performance of the proposed consensus algorithms.
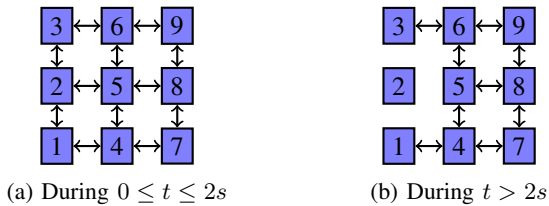
\begin{figure}
\centering
\begin{subfigure}[b]{0.23\textwidth}
\centering
\tikzstyle{dot}=[rectangle, draw, fill=blue!50,
                        minimum height=10pt, minimum width=10pt]
\begin{tikzpicture}[thick,scale=0.4]
        \node[dot](1) at (0,0) {1};
        \node[dot](4)at (2,0) {4};
        \node[dot](7) at (4,0) {7};
        \node[dot](2) at (0,2) {2};
        \node[dot](3) at (0,4) {3};
        \node[dot](5) at (2,2) {5};
        \node[dot](6) at (2,4) {6};
        \node[dot](8) at (4,2) {8};
        \node[dot](9) at (4,4) {9};
        \draw [<->](1.east)  -- (4.west);
        \draw [<->](1.north)  -- (2.south);
        \draw [<->](2.north)  -- (3.south);
        \draw [<->](3.east)  -- (6.west);
        \draw [<->](4.north)  -- (5.south);
        \draw [<->](7.north)  -- (8.south);
        \draw [<->](5.north)  -- (6.south);
        \draw [<->](6.east)  -- (9.west);
        \draw [<->](9.south)  -- (8.north);
        \draw [<->](2.east)  -- (5.west);
        \draw [<->](4.east)  -- (7.west);
        \draw [<->](8.west)  -- (5.east); 
\end{tikzpicture}
\caption{During $0 \leq t \leq 2s$}
\end{subfigure}
\hfill
\begin{subfigure}[b]{0.23\textwidth}
\centering
\tikzstyle{dot}=[rectangle, draw, fill=blue!50,
                        minimum height=10pt, minimum width=10pt]
\begin{tikzpicture}[thick,scale=0.4]
        \node[dot](1) at (0,0) {1};
        \node[dot](4)at (2,0) {4};
        \node[dot](7) at (4,0) {7};
        \node[dot](2) at (0,2) {2};
        \node[dot](3) at (0,4) {3};
        \node[dot](5) at (2,2) {5};
        \node[dot](6) at (2,4) {6};
        \node[dot](8) at (4,2) {8};
        \node[dot](9) at (4,4) {9};
        \draw [<->](1.east)  -- (4.west);
        \draw [<->](3.east)  -- (6.west);
        \draw [<->](4.north)  -- (5.south);
        \draw [<->](7.north)  -- (8.south);
        \draw [<->](5.north)  -- (6.south);
        \draw [<->](6.east)  -- (9.west);
        \draw [<->](9.south)  -- (8.north);
        \draw [<->](4.east)  -- (7.west);
        \draw [<->](8.west)  -- (5.east); 
\end{tikzpicture}
\caption{During $t > 2s$}
\end{subfigure}
\caption{ The underlying communication graph representing the network of agents-- initially, all the agents are on the same network during  $0 \leq t \leq 2s$. After two seconds, the \emph{agent 2} communication module fails, and $agent 2$ will not be able to communicate with all its neighbors creating two sub-networks.  }
\label{fig:example_consensus}
\vspace{-0.625cm}
\end{figure}
Consider a network of nine agents with the underlying communication graph given in Figure \ref{fig:example_consensus}. Let the agents' time-varying reference signals, $z_i(t)$, $i = \{1,\cdots,9\}$ be given as
\begin{equation}
\begin{split}
z_1 &= 5\cos(t), z_2 = 4\cos(t), z_3 = 3\cos(t),\\
z_4 &= 2\cos(t), z_5 = \cos(t), z_6 = -\cos(0.01t),\\
z_7 &= -2\cos(0.01t), z_8 = -3\cos(0.01t), \\
z_9 &= -4\cos(0.01t).
\end{split}
\end{equation}
We let the \emph{agent 2} to fail to communicate with its neighbours for all times after $2s$ as shown in Figure \ref{fig:example_consensus} to verify the robustness of the proposed algorithm to agents leaving the network. This action will create two sub-networks. The first sub-network has only a node and no edges, while the other sub-network has the remaining 8-nodes and the edges among them. We present the estimate of the average of the agents' time-varying signals implementing the consensus algorithms (\ref{eq:consensus_smci}) and (\ref{eq:consensus_smc1ti}) in Figures \ref{fig:firstc}-\ref{fig:seconde}, respectively. The consensus algorithm (\ref{eq:consensus_smci}) is proved to guarantee the boundedness of the estimation error. Choosing the parameters of the consensus algorithm  (\ref{eq:consensus_smci}) as $c = 1$, and $\rho = 16$ with sampling time of $0.01s$, the distributed estimate of the average of the agents' time varying reference signals and the consensus errors are presented in Figures \ref{fig:firstc} and \ref{fig:firste}, respectively. In contrast, the consensus algorithm (\ref{eq:consensus_smc1ti}) guarantees the asymptotic convergence of the consensus error.
Choosing the parameters of the consensus algorithm  (\ref{eq:consensus_smc1ti}) as $c = 4$, $\alpha = 0.16$, and $\rho = 4.1$ with sampling time of $0.01s$, the distributed estimate of the average of the agents' time varying reference signals and the consensus estimation errors are presented in Figures \ref{fig:secondc} and \ref{fig:seconde}, respectively.

 We compare our results to the robust discontinuous consensus protocol in \cite{george2017robust} which is proved to converge in finite time. However, the precise implementation of the discontinuous switching estimator input signal requires very tiny sampling time and it is impossible to guarantee zero steady-state error in practice. Now, choosing the parameters of the consensus algorithm in \cite{george2017robust} as $\alpha_i = 10$ with sampling time of of $0.0001s$, the estimate of the average and the consensus estimation errors are presented in Figures \ref{fig:george1c} and \ref{fig:george1e}, respectively. The results show that unlike the consensus algorithms proposed in this paper, the consensus algorithms in \cite{george2017robust} requires very low sampling times. For comparably higher sampling times, the results of the consensus algorithm in \cite{george2017robust} deteriorates as demonstrated in Figures \ref{fig:george2c}-\ref{fig:george2e}. The simulation results in Figures \ref{fig:george2c}-\ref{fig:george2e} are generated by choosing the parameters of the consensus algorithms as  $\alpha_1 = 5.7$, $\alpha_2 = 4.6$, $\alpha_3 = 3.4$, $\alpha_4 = 2.3$, $\alpha_5 = 1.2$, $\alpha_6 = 1.2$, $\alpha_7 = 2.3$, $\alpha_8 = 3.4$, and $\alpha_9 = 4.6$ with the sampling time of $0.01s$. In fact, the authors in \cite{george2017robust} observed these issues in their algorithm and proposed a boundary layer approximation to circumvent the problem. Nevertheless, with the boundary layer approximation, the convergence proof will only guarantee a bounded steady-state error in a finite time. However, designed by similar approaches to \cite{george2017robust}, our second algorithm guarantees asymptotic convergence of the average consensus error while avoiding any chattering effects.
\begin{figure*}[!h]
\begin{subfigure}[b]{1.0\columnwidth}
 \centering
\includegraphics[scale=0.225]{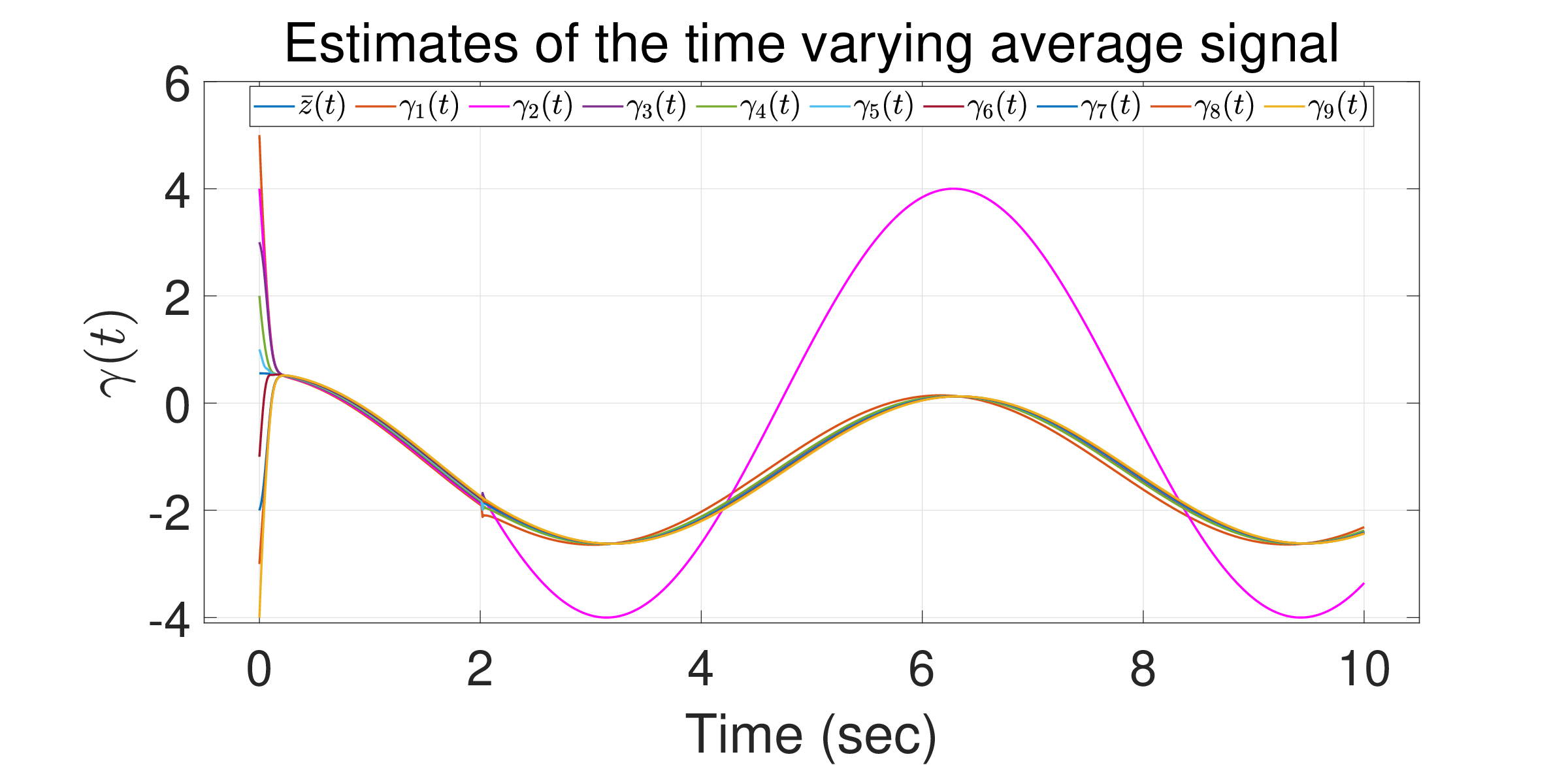}
\caption{}
\label{fig:firstc}
\end{subfigure}
\begin{subfigure}[b]{1.0\columnwidth}
\centering
\includegraphics[scale=0.225]{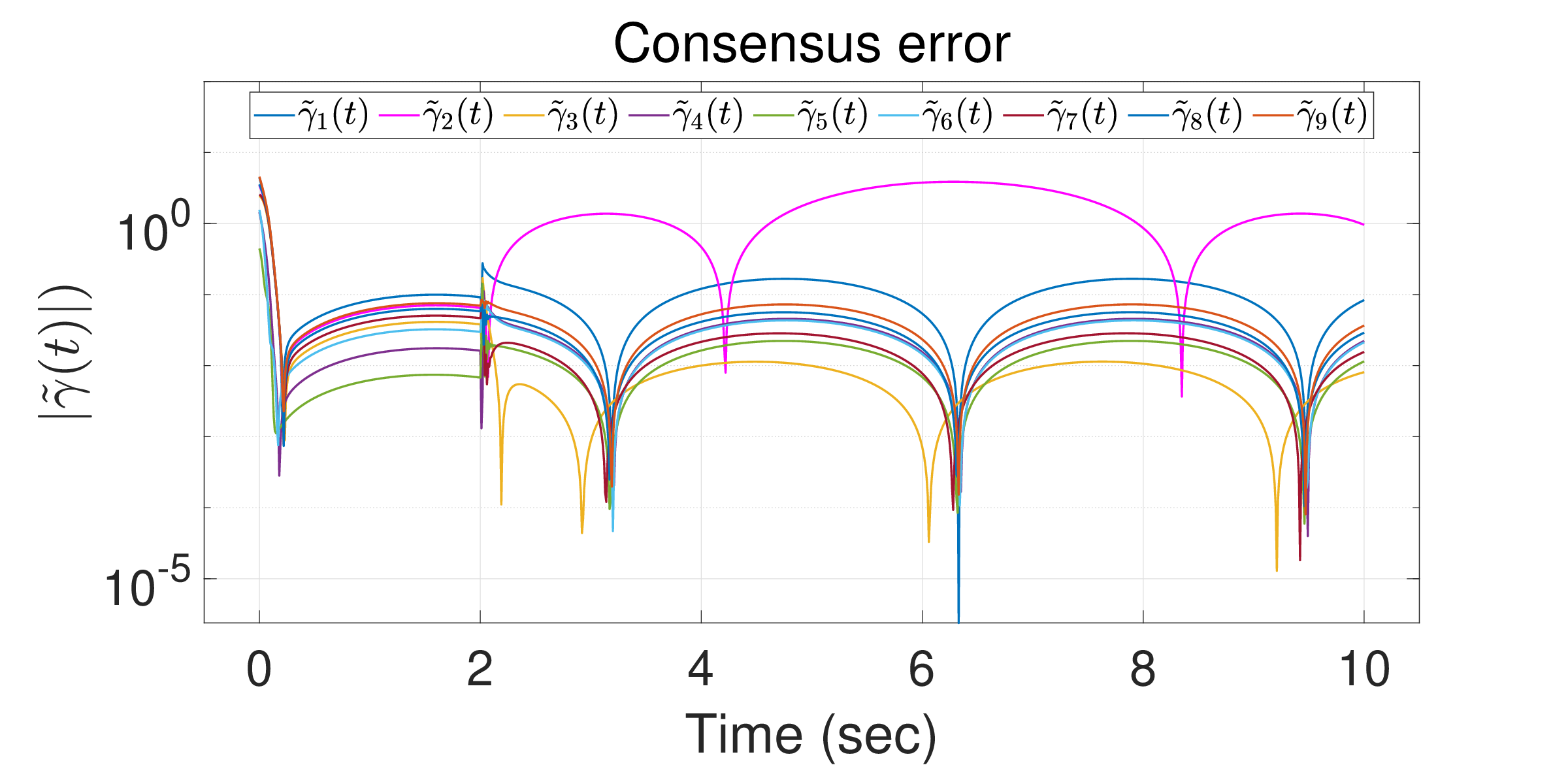}
\caption{}
\label{fig:firste}
\end{subfigure}\\
\begin{subfigure}[b]{1.0\columnwidth}
\centering
\includegraphics[scale=0.225]{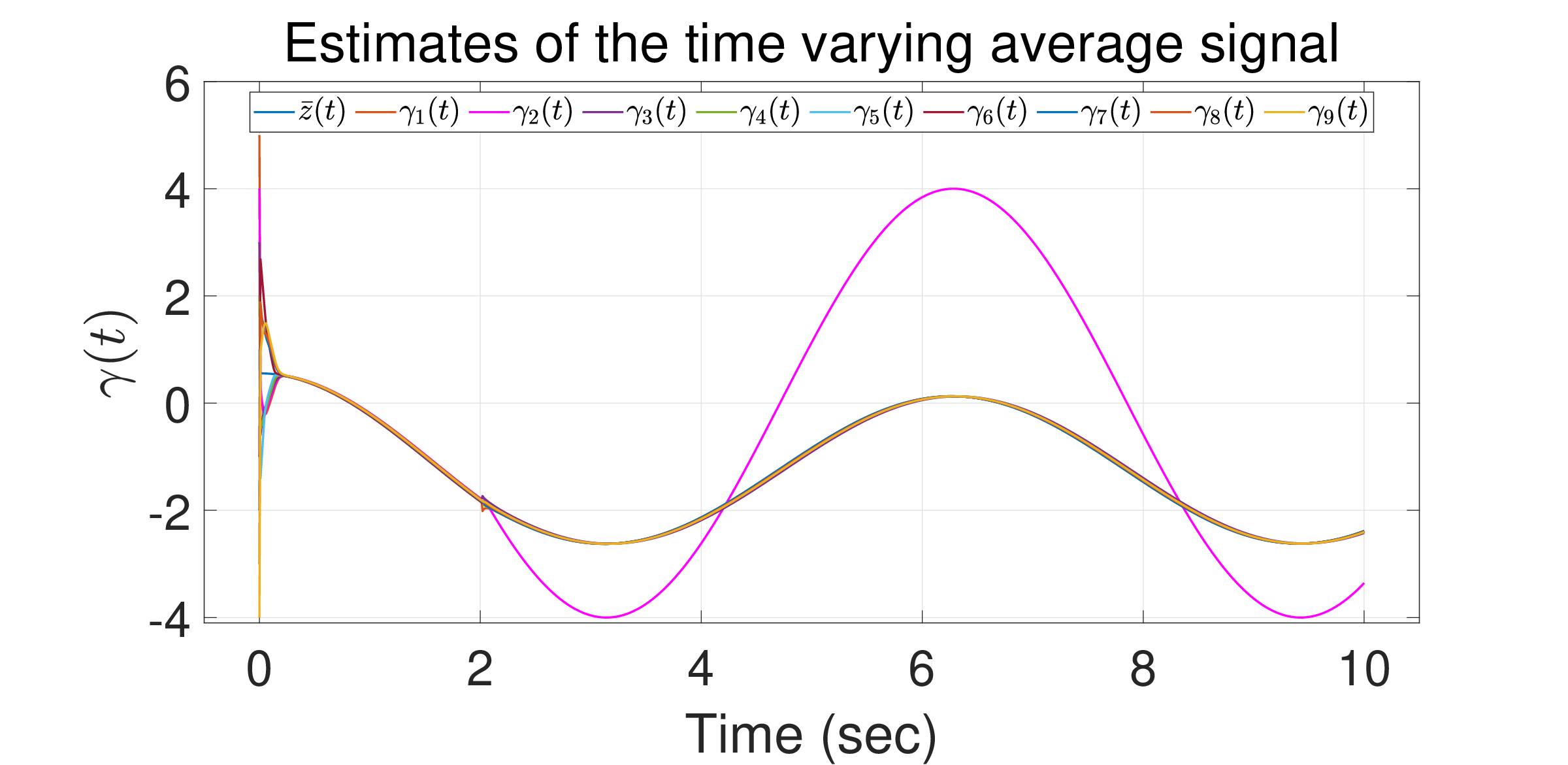}
\caption{}
\label{fig:secondc}
\end{subfigure}
\begin{subfigure}[b]{1.0\columnwidth}
\centering
\includegraphics[scale=0.225]{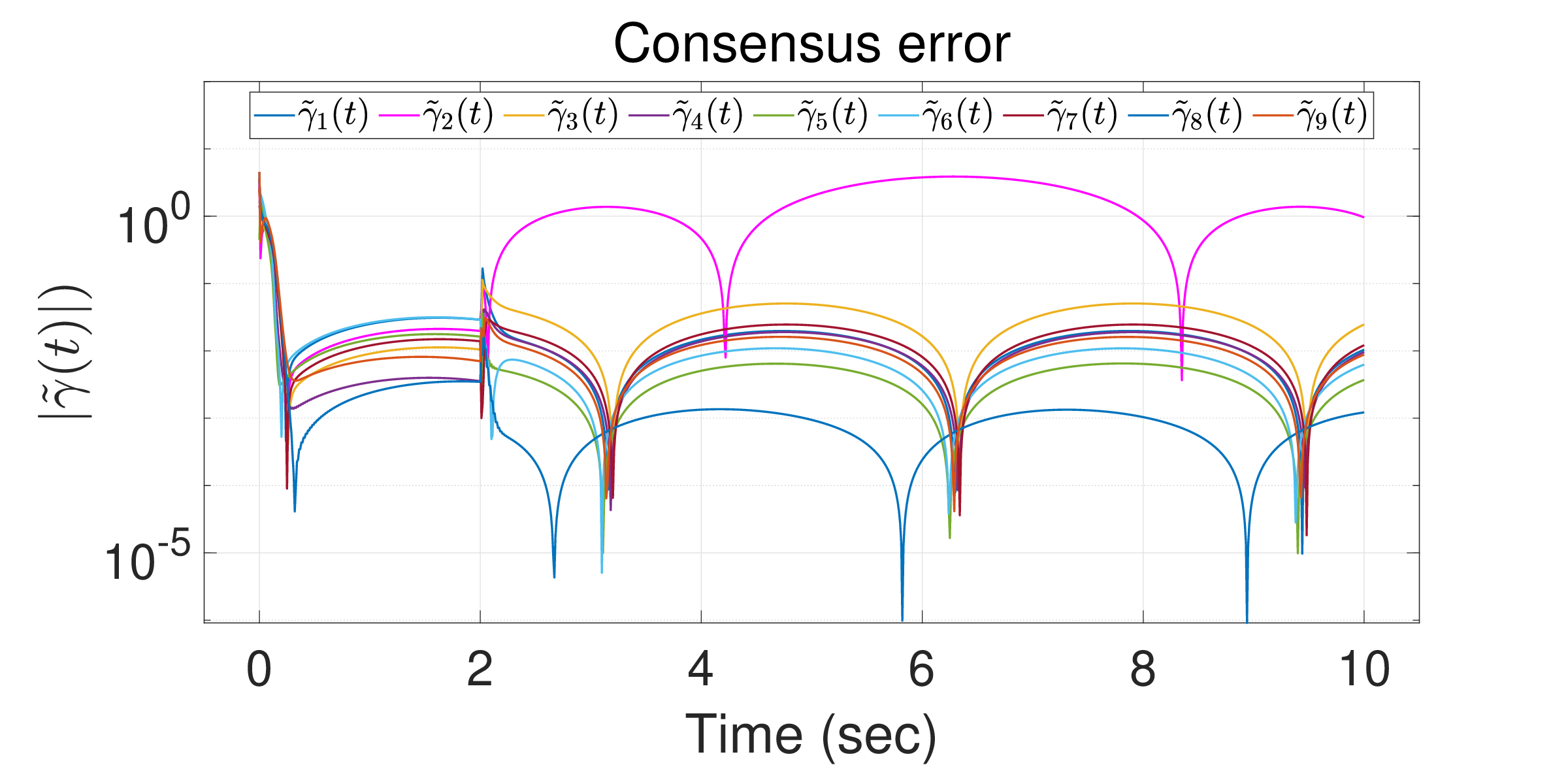}
\caption{}
\label{fig:seconde}
\end{subfigure}\\
\begin{subfigure}[b]{1.0\columnwidth}
\centering
\includegraphics[scale=0.225]{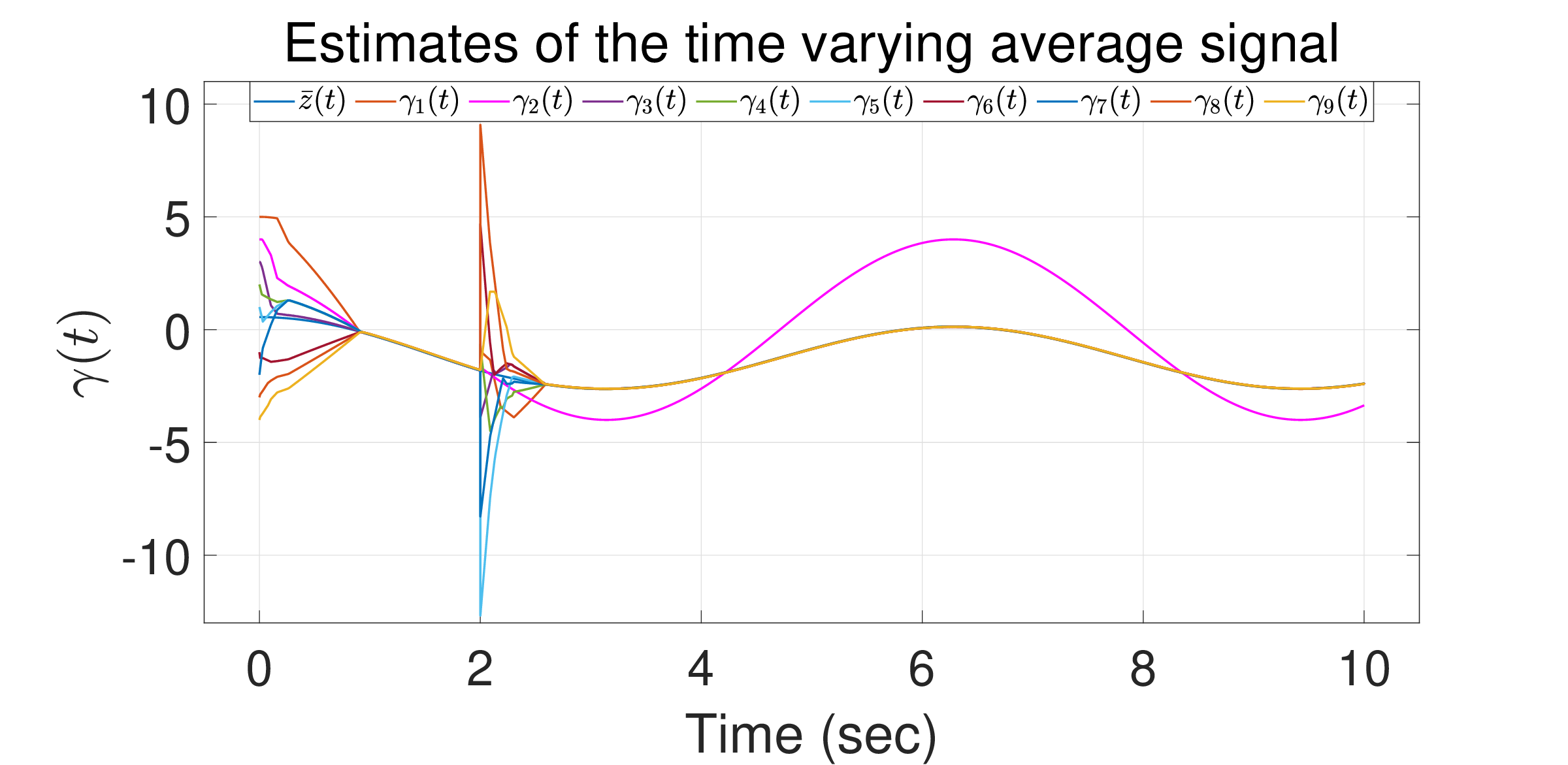}
\caption{}
\label{fig:george1c}
\end{subfigure}
\begin{subfigure}[b]{1.0\columnwidth}
\centering
\includegraphics[scale=0.225]{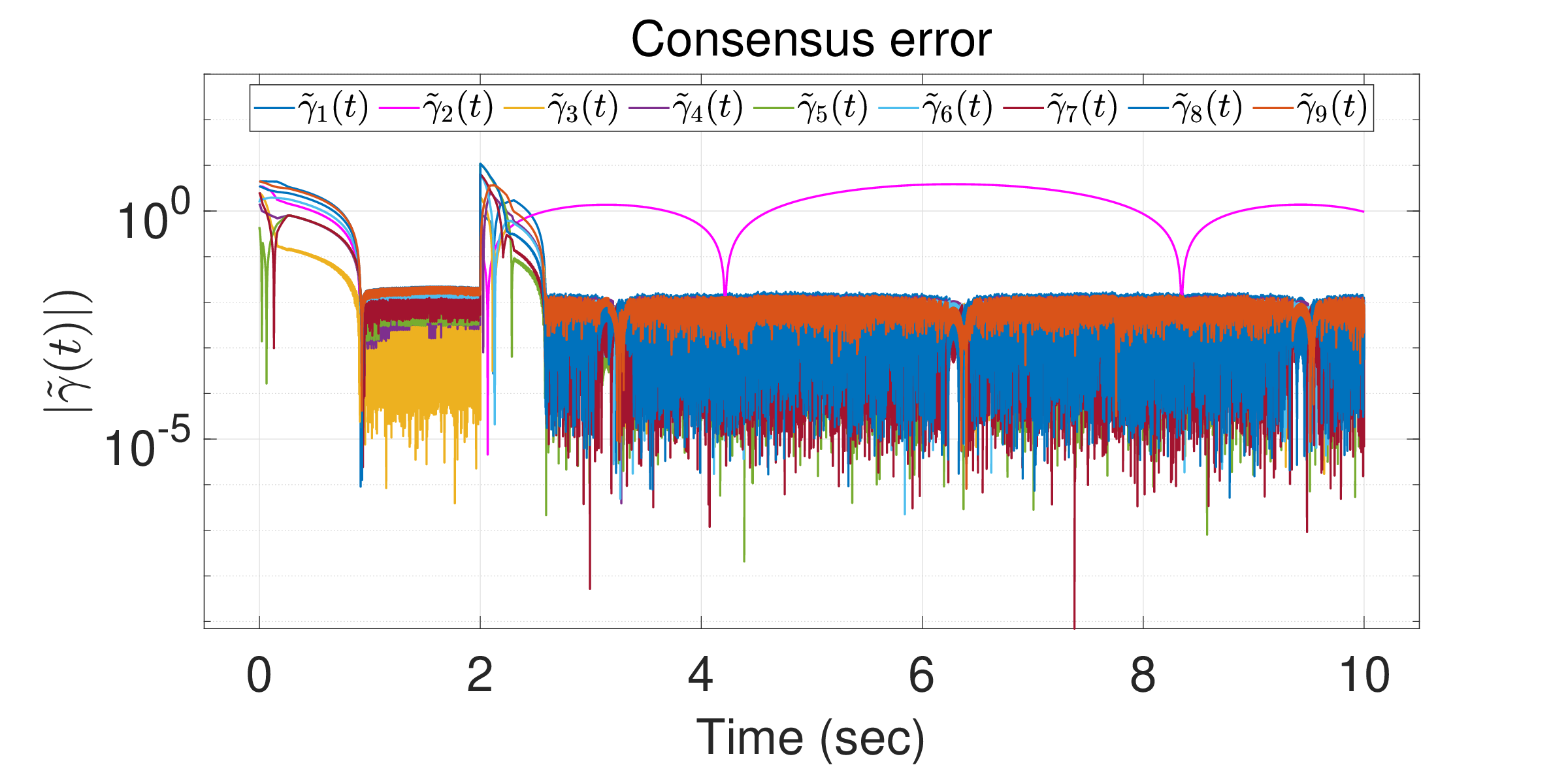}
\caption{}
\label{fig:george1e}
\end{subfigure}\\
\begin{subfigure}[b]{1.0\columnwidth}
\centering
\includegraphics[scale=0.225]{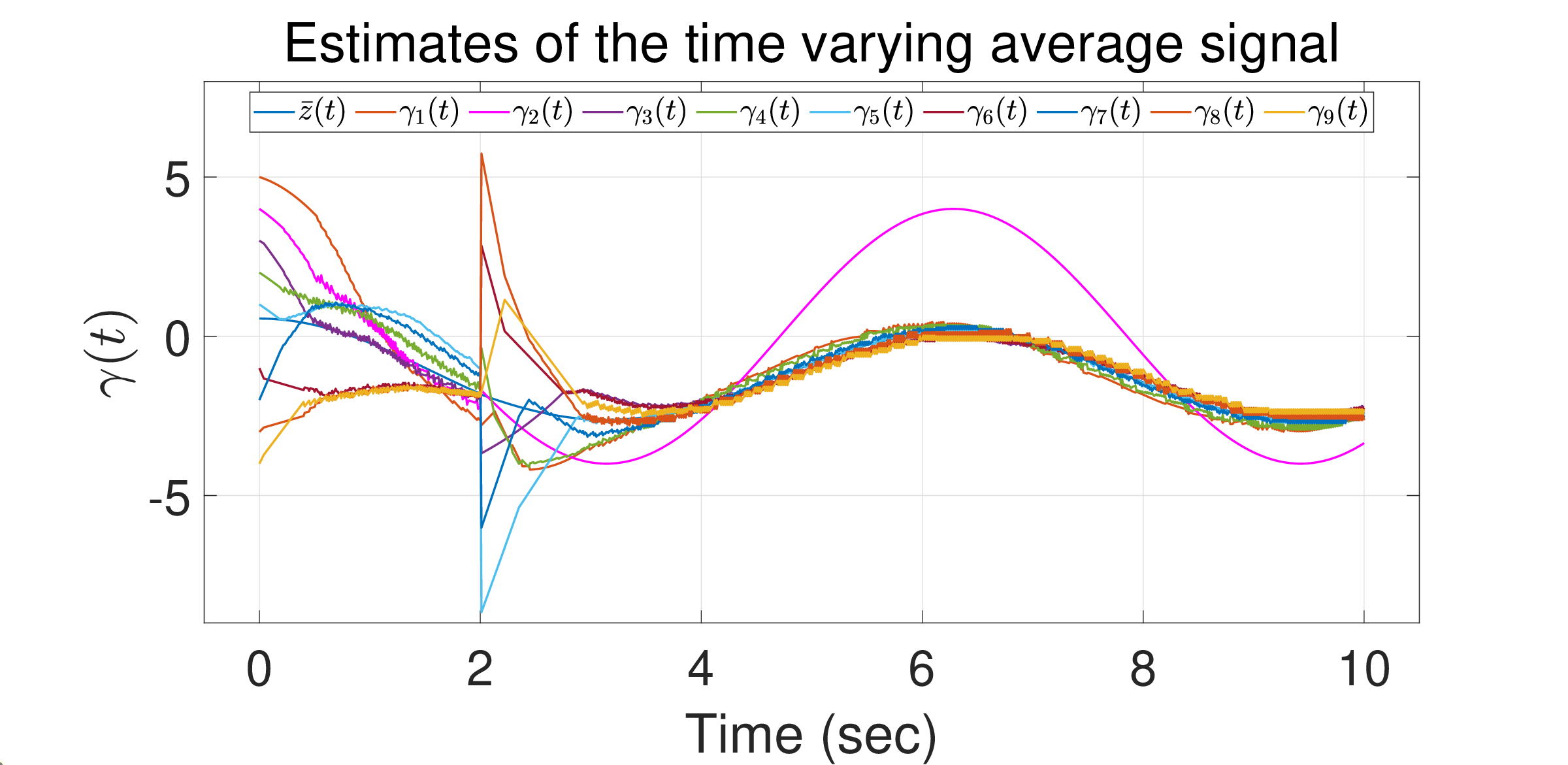}
\caption{}
\label{fig:george2c}
\end{subfigure}
\begin{subfigure}[b]{1.0\columnwidth}
\centering
\includegraphics[scale=0.225]{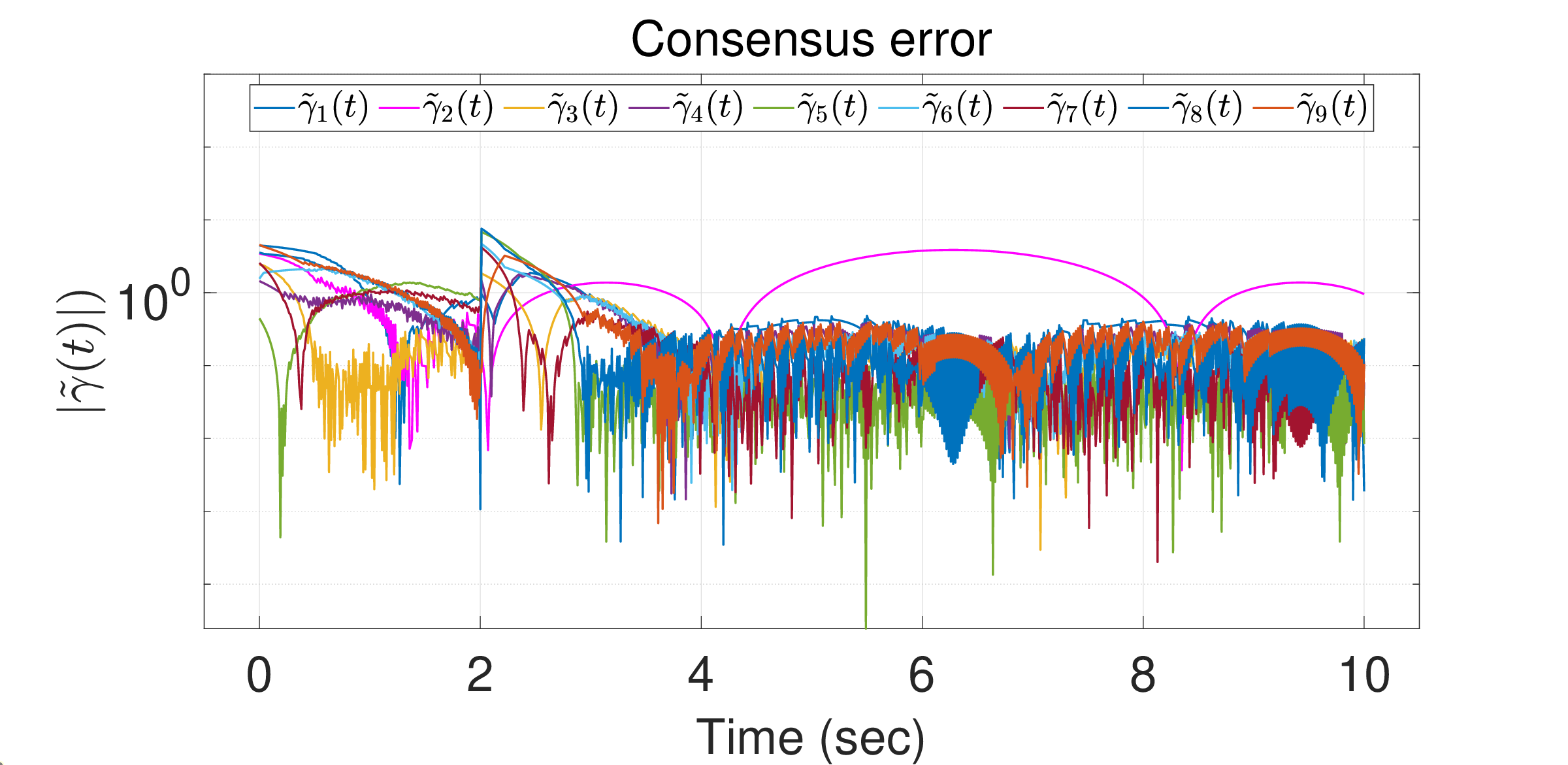}
\caption{}
\label{fig:george2e}
\end{subfigure}
\caption{Performance comparisons of the proposed DAC algorithms with the DAC algorithm proposed in \cite{george2017robust}: (a) Estimate of average of multiple time-varying signals using our  proposed consensus algorithm (\ref{eq:consensus_smci}), (b) Log plot of estimation error of our proposed consensus algorithm (\ref{eq:consensus_smci}), (c) Estimate of average of multiple time-varying signals using our proposed consensus algorithm (\ref{eq:consensus_smc1ti}), (d) Log plot of estimation error of our proposed consensus algorithm (\ref{eq:consensus_smc1ti}), (e) Estimate of average of multiple time-varying signals using the consensus algorithm in \cite{george2017robust} with sampling time of $0.0001s$, (f) Log plot of estimation error of the consensus algorithm in \cite{george2017robust} with sampling time of $0.0001s$, (g) Implementation of the consensus algorithm in \cite{george2017robust} with sampling time of $0.01 s$, and (h) Log plot of estimation error of the consensus algorithm in \cite{george2017robust} implemented with sampling time of $0.01s$.}
\end{figure*}

\section{conclusion}
\label{sec:form_and_tra_conc}
In this paper, we proposed two DAC algorithms that allow a network of agents to estimate the average of their time-varying reference signals cooperatively. The algorithms are robust to agents joining and leaving the network, at the same time, remove the chattering phenomena that arise in many non-linear consensus protocols. Further, we provided the convergence and robustness analysis of the proposed consensus protocols utilizing Lyapunov functions. The convergence analysis shows that the first algorithm guarantees bounded steady-state error, while the second algorithm guarantees asymptotic convergence to zero steady-state error. We also provided a discrete-time implementation and demonstrated a simulation example to show the effectiveness of the proposed consensus protocols. Future work focuses on extending the algorithm to directed graph topology in the presence of delays, and sensor and model uncertainties.
\section*{Acknowledgment}
This research is supported by Air Force Research Laboratory and OSD under agreement number FA8750-15-2-0116 as well as the National Science Foundation under award number 1832110.
The U.S. Government is authorized to reproduce and distribute reprints for Governmental purposes notwithstanding any copyright notation thereon. The views and conclusions contained herein are those of the authors and should not be interpreted as necessarily representing the official policies or endorsements, either expressed or implied, of Air Force Research Laboratory,  OSD, National Science Foundation, or the U.S. Government.

\bibliographystyle{IEEEtran}
\bibliography{Conference_ver}
\end{document}